\theoremstyle{plain}
\newtheorem{theorem}{Theorem}[section]
\newtheorem{lemma}[theorem]{Lemma}
\theoremstyle{definition}
\numberwithin{equation}{section}
\begin{document}

\title[Three Quasi-Local Masses]
{Three Quasi-Local Masses}
\author[Katz]{Neil N. Katz}
\address{Department of Mathematics\\
New York City College of Technology\\
300 Jay Street\\
Brooklyn, NY 11201} \email{nkatz@citytech.cuny.edu}
\author[Khuri]{Marcus A. Khuri}
\address{Department of Mathematics\\
Stony Brook University\\ Stony Brook, NY 11794}
\email{khuri@math.sunysb.edu}
\thanks{The second author acknowledges the support of
NSF Grant DMS-1007156 and a Sloan Research Fellowship.}
\begin{abstract}
We propose a definition of quasi-local mass based on the Penrose Inequality.
Two further definitions are given by measuring distortions of the exponential
map.
\end{abstract}
\maketitle

\section{Quasi-Local Mass from the Penrose Inequality}
\label{Introduction}

General Relativity differs from most classical field theories in
that there is no well-defined notion of energy density for the
gravitation field, as can be seen from Einstein's principle of
equivalence.  Thus at best one can only hope to calculate the
mass/energy contained within a domain, as opposed to at a point. Such
a concept is referred to as quasi-local mass, that is, a
functional which assigns a real number to each compact spacelike
hypersurface in a spacetime.  Of course there are many properties
that any such functional should satisfy under appropriate
conditions; most notable among these are the properties of
nonnegativity and rigidity (for an expanded list see \cite{ChristoYau}).
Although numerous definitions of quasi-local mass have been
proposed, most seem to possess undesirable properties, in fact
most fail the crucial test of nonnegativity.  However there is one
definition which appears to satisfy most of the required
properties, namely the mass proposed by Bartnik \cite{Bartnik}.

Bartnik's idea is to localize the ADM (or total) mass in the
following way.  Here we restrict attention to the time symmetric
case.  Let $(\Omega,h)$ be a compact 3-manifold with boundary, and
define an \textit{admissible extension} to be an
asymptotically flat 3-manifold $(M,g)$ with (or without) boundary satisfying the following
conditions: $(M,g)$ has nonnegative scalar curvature, the boundary (if nonempty) is minimal and no
other minimal surfaces exist within $(M,g)$, lastly $(\Omega,h)$ embeds isometrically into $(M,g)$.
Then Bartnik's mass is given by
\begin{equation*}
\mathcal{M}_{B}(\Omega)=\inf\{M_{ADM}(M,g)\mid (M,g) \text{ an admissible
extension of } (\Omega,h)\},
\end{equation*}
where $M_{ADM}$ is the ADM mass. A primary benefit of this construction is that nonnegativity is
achieved for free, from the positive mass theorem.  However it is
not a priori clear that this definition is nontrivial, in the
sense that the mass is nonzero whenever $\Omega$ is nonflat.  That
this is the case \cite{HuiskenIlmanen}, is a consequence of (and reason for) the no
horizons (minimal surfaces) assumption in the class of admissible extensions.
Although this mass satisfies many other desired properties, it
suffers from the apparent deficiency of being difficult to
compute.  In order to remedy this problem, Bartnik \cite{Bartnik} has proposed the following solution.
Namely, he conjectures that the infimum
$\mathcal{M}_{B}(\Omega)$ is realized by an admissible extension $(M_{0},g_{0})$
which is smooth and static vacuum outside of $\Omega$, is $C^{0,1}$ across
$\partial\Omega$, and has nonnegative scalar curvature (in the
distributional sense). Thus Bartnik's mass may be computed as the
ADM mass of the given static vacuum extension.

Recall that a Riemannian manifold $(M,g)$ is static vacuum if
there exists a potential function $u(x)>0$ which satisfies the
static vacuum Einstein equations (equivalently, the spacetime metric $-
u^{2}dt^{2}+g$ is Ricci flat):
\begin{equation}\label{e1}
\mathrm{Ric}(g)=u^{-1}\nabla^{2}u,\text{ }\text{ }\text{ }\text{
}\Delta u=0.
\end{equation}
Here $\nabla^{2}$ is the Hessian and $\Delta$ is the Laplacian with respect to
the metric $g$. The physical reasoning behind this conjecture is that once all
mass/energy has been squeezed out, there is nothing left to
support matter fields (vacuum) or gravitational dynamics (static).
Moreover significant mathematical motivation exists as well.  For
instance, if the mass minimizing extension is not static then
Corvino \cite{Corvino} has shown that there exist compactly supported metric
variations which increase the scalar curvature, so that one may
then perform an appropriate conformal deformation to lower the ADM
mass.

Ideally there should be a unique static vacuum extension, and so one must
append boundary conditions to the static vacuum equations.  The
choice of boundary conditions is dictated by the desire for the
positive mass theorem to remain valid for the complete manifold
$(\widetilde{M}_{0}\cup\Omega,g_{0}\cup h)$, where $\widetilde{M}_{0}=M_{0}-\Omega$.
This will in fact be the case (\cite{Miao1}, \cite{ShiTam}) if the metric $g_{0}$ remains
$C^{0,1}$ across the divide between the static and nonstatic
parts, and if the mean curvatures agree:
\begin{equation}\label{e2}
g_{0}|_{\partial\widetilde{M}_{0}}=h|_{\partial\Omega},\text{ }\text{
}\text{ }\text{ }H_{\partial\widetilde{M}_{0}}=H_{\partial\Omega}.
\end{equation}
Furthermore it follows from the Riccati and Gauss equations that
these Bartnik boundary conditions imply that the scalar curvature
is in fact nonnegative in the distributional sense, as demanded by
the conjecture.

The conjecture may be divided into two distinct parts.  The first
step is to establish a basic existence result for the static
vacuum equations (\ref{e1}) with boundary conditions (\ref{e2}).  Then with this
candidate mass minimizer in hand, the second step entails showing
that the given static solution actually realizes the infimum. While there
has been some progress on part one of the conjecture \cite{AndersonKhuri}, \cite{Miao2}, the
second part remains essentially uninvestigated.

It should be noted that the static metric extension conjecture (if true) shows
that the Bartnik mass depends only on the boundary geometry. Namely, let $\Sigma=\partial\Omega$,
then $\mathcal{M}_{B}(\Omega)$ depends only on the induced metric and mean curvature
on $\Sigma$. For this reason, we may write $\mathcal{M}_{B}(\Omega)=\mathcal{M}_{B}(\Sigma)$.

Although the Bartnik mass has many desirable properties it has been criticized for
potentially overestimating the true `physical' mass contained in a domain \cite{Szabados}. In this
section we propose a modification of the Bartnik mass with the aim of alleviating this
difficulty. In order to state the definition, we must first recall the Penrose inequality.
Let $(M,g)$ be an asymptotically flat Riemannian 3-manifold, with nonnegative scalar curvature,
and outermost minimal surface boundary $\partial M$. A minimal surface is called outermost if no other
minimal surface encloses it. If $|\partial M|$ denotes the area of the boundary, then the Riemannian Penrose
inequality asserts that
\begin{equation*}
M_{ADM}(M,g)\geq\sqrt{\frac{|\partial M|}{16\pi}},
\end{equation*}
and equality holds if and only if $(M,g)$ is isometric to the $t=0$ slice of the Schwarzschild spacetime.
This theorem has been proven by Huisken and Ilmanen \cite{HuiskenIlmanen} for one boundary component, and by Bray \cite{Bray} for multiple
boundary components. Just as Bartnik's mass is based on the positive mass theorem, we may use the Penrose inequality
to construct a quasi-local mass in an analogous way. Namely, with the same notion of an admissible extension, we propose
the following definition
\begin{equation*}
\mathcal{M}_{QL}(\Omega)=\inf\{M_{ADM}(M,g)-\sqrt{\frac{|\partial M|}{16\pi}}
	\mid (M,g) \text{ an admissible extension of } (\Omega,h)\}.
\end{equation*}
Notice that an admissible extension has minimal surface boundary and that no other minimal surfaces exist within the
extension, hence the boundary is in fact an outermost minimal surface and the Penrose inequality applies to show that
$\mathcal{M}_{QL}(\Omega)\geq 0$ for any $(\Omega,h)$ with nonnegative scalar curvature.
Moreover, it is immediate from the definition that monotonicity holds, in that if $\Omega_{1}\subset\Omega_{2}$ then
$\mathcal{M}_{QL}(\Omega_{1})\leq\mathcal{M}_{QL}(\Omega_{2})$. When compared with the Bartnik mass, it is apparent that
$\mathcal{M}_{QL}\leq\mathcal{M}_{B}$, which suggests that this new definition may remedy the potential tendency of
the Bartnik mass to overestimate the true value of the physical mass. It should also be observed that $\mathcal{M}_{QL}(\Omega)=0$
for any domain $\Omega$ contained within the $t=0$ slice of the Schwarzschild spacetime and lying outside of the horizon. Although
this differs in spirit from the Bartnik mass, it conforms with other masses such as the well-known Komar mass \cite{Szabados} that also
shares this property. We conjecture that $\mathcal{M}_{QL}(\Omega)=0$ if and only if $\Omega$ is a domain exterior to the horizon in
the $t=0$ slice of the Schwarzschild spacetime. This generalizes the rigidity statement for the Bartnik mass, in that Minkowski space
is replaced by the Schwarzschild solution. Furthermore, in analogy with Bartnik's conjecture, we conjecture that the infimum
$\mathcal{M}_{QL}(\Omega)$ is realized by an admissible extension $(M_{0},g_{0})$ which is smooth and static outside of $\Omega$, is $C^{0,1}$
across $\partial\Omega$, and has nonnegative scalar curvature (in the distributional sense). If true, $\mathcal{M}_{QL}(\Omega)$ may
then be computed as the ADM mass minus the mass of the boundary, of the given static extension. The extension should be unique, and
should depend only on the induced metric and mean curvature of the boundary $\Sigma=\partial\Omega$. Thus, as in the case with the
Bartnik mass, we should be able to write $\mathcal{M}_{QL}(\Omega)=\mathcal{M}_{QL}(\Sigma)$.

\section{Quasi-Local Mass from Volume Distortion}

In the remainder of this note, two further definitions of quasi-local mass will
be proposed; both definitions will be restricted to the time symmetric case only.
Although unrelated to the Bartnik mass, they share a common theme
in that each measures a distortion of the exponential map.  The first, denoted
$\underline{\mathcal{M}}$, measures volume distortion and is discussed in the present
section.  The second, denoted $\overline{\mathcal{M}}$, is discussed in the next section
and measures the distortion of a modulus of curves. In addition to sharing a common
theme, these two definition are highly related in a local manner, but differ globally.
More precisely, it turns out that inside strictly convex balls the modulus of curves is
given by the volume (Lemma 3.2). Moreover, volume is an upper bound
for the modulus. This means that the two definitions agree over small scales, but most
likely differ in the large with $\underline{\mathcal{M}}\leq\overline{\mathcal{M}}$.

We will use the following notation.  Let $(M,g)$ be a complete, time symmetric (zero extrinsic curvature),
spacelike hypersurface, of a spacetime satisfying the dominant energy condition. That is, $(M,g)$ is a
Riemannian $3$-manifold with nonnegative scalar curvature. The tangent space at $p\in M$ will be denoted
by $T_pM$, and $S_pM$ will denote its unit tangent space.  All curves
will be parameterized over the unit interval unless otherwise
specified.  For $v\in S_pM$ we denote by $\tau(v)$ the (possibly
infinite) distance to the tangential cut locus of $p$ in the direction
$v$.  We also denote
\begin{equation*}
E_pM=\{tv\mid v\in S_pM,\text{ }0\leq t<\tau(v) \}.
\end{equation*}
For $U\subset M$ we denote by $\mathrm{Vol}_g(U)$ its Riemannian volume, and for
$V\subset T_pM$ we denote by $\mathrm{Vol}(V)$ its Euclidean volume induced by
the inner product given by $g$ restricted to $T_pM$.  The metric ball
in $(M,g)$ of radius $r$ centered at $p$ will be written $B_{r}(p)$.
The ball of radius $r$ centered at $u\in T_pM$ will be denoted $B(r,u)$.

We make frequent use of the asymptotic expansion from Theorem 3.1 in
\cite{AG}, for the volume of metric balls in a Riemannian $n$-manifold:
\begin{equation}\label{e3}
\mathrm{Vol}_g(B_{r}(p)) = a_0 r^n -a_1\mathrm{Scal}_g(p)r^{n+2}+a_2\Lambda(g,p)r^{n+4}
	+O(r^{n+6}),
\end{equation}
where $a_0,a_1,a_2>0$ are constants depending only on $n$, and
$\Lambda(g,p)$ can be expressed in terms of coordinate fields
$\{X_1,\ldots,X_n\}$, chosen to be orthonormal at $p$, as follows:
\begin{eqnarray}\label{e4}
\Lambda(g,p) &=& -3\sum_{i,j,k,l}\mathrm{Riem}_g(X_i,X_j,X_k,X_l)^2
	+ 8\sum_{i,j}\mathrm{Ric}_g(X_i,X_j)^2  \\
	& & +5\mathrm{Scal}_g(p)^2-18\Delta_g\mathrm{Scal}_g(p).\nonumber
\end{eqnarray}

For any path connected, precompact, open set $\Omega\subset M$ and any
$p\in M$ we let
\begin{equation}\label{e5}
\mathcal{A}_p(\Omega) = \{U\subset E_pM\text{ open }\mid
	\overline{\mathrm{exp}_pU}\subset\Omega,\text{ } tu\in U \text{ for all }
	u\in U \text{ and } t\in[0,1]\}.
\end{equation}
Now we define the (outer) volume distortion
of the exponential map at the point $p\in\Omega$, by
\begin{equation*}
\underline{\mathcal{K}}_p(\Omega) = \sup_{U\in\mathcal{A}_p(\Omega)}\frac
	{\mathrm{Vol}(U)}{\mathrm{Vol}_g(\mathrm{exp}_pU)}.
\end{equation*}
A quasi-local mass (in the time symmetric case) should depend only on the the boundary $\Sigma=\partial\Omega$, its
induced metric $g|_{\Sigma}$, and mean curvature $H_{g}$; the triple $(\Sigma, g|_{\Sigma}, H_{g})$ will be referred to as
Bartnik boundary data. With this in mind we then propose the following definition of quasi-local mass
\begin{equation*}
\underline{\mathcal{M}}(\Sigma) = \inf_{h}
	\sup_{p\in\Omega}\log\underline{\mathcal{K}}_p(\Omega),
\end{equation*}
where in analogy with the Bartnik mass the infimum is taken over all metrics $h$
with nonnegative scalar, whose Bartnik boundary data $(\partial\Omega,h|_{\partial\Omega},H_{h})$
agrees with the given boundary data $(\Sigma,g|_{\Sigma}, H_{g})$. We conjecture that these boundary conditions guarantee that
this definition of mass is nontrivial. That is, it should hold that $\underline{\mathcal{M}}(\Sigma)>0$ unless $\Sigma$
embeds isometrically into Euclidean space in such a way that the mean curvature from the Euclidean embedding agrees with
$H_{g}$, the mean curvature from the embedding in $(M,g)$. Here we shall prove the following.

\begin{theorem}\label{thm1}
Let $(M,g)$ be a complete Riemannian 3-manifold with nonnegative scalar curvature. For any closed surface $\Sigma\subset M$ bounding a path connected
precompact domain, we have $\underline{\mathcal{M}}(\Sigma)\geq 0$. Equality
holds and the infimum is realized by a metric $h$ on a domain $\Omega$, if and
only if $(\Omega,h)$ is locally isometric
to Euclidean space.  If in addition there is a set $U\in\mathcal{A}_p(\Omega)$ for some $p\in\Omega$ such that
$\mathrm{exp}_p(U)=\Omega$, then $\Omega$ is isometric to a subset of Euclidean space.
\end{theorem}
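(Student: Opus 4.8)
The plan is to read everything off the volume expansion (\ref{e3})--(\ref{e4}) with $n=3$, tested against the simplest members of $\mathcal{A}_p(\Omega)$, namely the metric balls themselves. For $p$ in the interior of $\Omega$ and $r$ small, the Euclidean ball $B(r,0)\subset T_pM$ is star-shaped, lies in $E_pM$, and satisfies $\mathrm{exp}_p B(r,0)=B_r(p)$ with $\overline{B_r(p)}\subset\Omega$, so $B(r,0)\in\mathcal{A}_p(\Omega)$. Its distortion ratio is
\begin{equation*}
\frac{\mathrm{Vol}(B(r,0))}{\mathrm{Vol}_g(B_r(p))}
	=\frac{a_0 r^3}{a_0 r^3 - a_1\mathrm{Scal}_g(p)r^{5}+a_2\Lambda(g,p)r^{7}+O(r^{9})}\longrightarrow 1
\end{equation*}
as $r\to 0$. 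Taking the supremum over $\mathcal{A}_p(\Omega)$ therefore gives $\underline{\mathcal{K}}_p(\Omega)\geq 1$, so $\log\underline{\mathcal{K}}_p(\Omega)\geq 0$ for every $p$ and every admissible $h$; passing to $\sup_p$ and then $\inf_h$ yields $\underline{\mathcal{M}}(\Sigma)\geq 0$. (The lower bound $\underline{\mathcal{K}}_p\geq 1$ itself holds for any metric; the curvature sign enters only in the rigidity argument below.)

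For the equality statement I would first dispose of the easy implication. If $(\Omega,h)$ is flat, then in normal coordinates $\mathrm{exp}_p$ pulls $h$ back to the Euclidean metric on $E_pM$; as $\mathrm{exp}_p$ is injective there, it is a volume-preserving diffeomorphism onto its image, every ratio equals $1$, and hence $\underline{\mathcal{K}}_p(\Omega)=1$ for all $p$. This $h$ then gives $\sup_p\log\underline{\mathcal{K}}_p(\Omega)=0$, so the infimum is attained with value $0$. Conversely, suppose some admissible $h$ realizes $\sup_p\log\underline{\mathcal{K}}_p(\Omega)=0$. Together with $\underline{\mathcal{K}}_p\geq 1$ this forces $\underline{\mathcal{K}}_p(\Omega)=1$ at every $p$, i.e.\ the ratio above never exceeds $1$. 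Were $\mathrm{Scal}_g(p_0)>0$ at some point, the expansion would make $\mathrm{Vol}_g(B_r(p_0))<a_0 r^3$ and hence the ratio exceed $1$ for small $r$, a contradiction; thus $\mathrm{Scal}_g\equiv 0$ on $\Omega$, and in particular $\Delta_g\mathrm{Scal}_g\equiv 0$.

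The decisive point is a three-dimensional curvature identity. Since the Weyl tensor vanishes in dimension $3$, one has $\sum_{i,j,k,l}\mathrm{Riem}_g(X_i,X_j,X_k,X_l)^2 = 4\sum_{i,j}\mathrm{Ric}_g(X_i,X_j)^2-\mathrm{Scal}_g^2$; substituting this and $\mathrm{Scal}_g\equiv 0$ into (\ref{e4}) collapses the remaining coefficient to $\Lambda(g,p)=-4\sum_{i,j}\mathrm{Ric}_g(X_i,X_j)^2=-4|\mathrm{Ric}_g(p)|^2\leq 0$. If $\mathrm{Ric}_g(p_1)\neq 0$ at some point, then $\Lambda(g,p_1)<0$ makes the ratio exceed $1$ for small $r$ once more, again contradicting $\underline{\mathcal{K}}_{p_1}=1$. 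Hence $\mathrm{Ric}_g\equiv 0$ on $\Omega$, and because in dimension $3$ the full curvature tensor is algebraically determined by the Ricci tensor, $(\Omega,h)$ is flat, i.e.\ locally isometric to Euclidean space. I expect this dimensional reduction of $\Lambda$, together with the careful bookkeeping of signs in the expansion, to be the main technical hurdle; the hypothesis $\mathrm{Scal}_g\geq 0$ is used only to exclude the competing regime in which the geodesic balls are too large.

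For the final, global conclusion, flatness makes $\mathrm{exp}_p$ a local isometry, and it is injective on $E_pM\supset U$. Thus if $\mathrm{exp}_p(U)=\Omega$ for some star-shaped $U$, the restriction $\mathrm{exp}_p|_U\colon U\to\Omega$ is a bijective local isometry and therefore a global isometry, identifying $(\Omega,h)$ with the subset $U$ of the Euclidean space $(T_pM,g_p)$.
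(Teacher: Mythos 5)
Your proposal is correct and follows essentially the same route as the paper: both test the distortion $\underline{\mathcal{K}}_p$ on small geodesic balls $B(r,0)\in\mathcal{A}_p(\Omega)$, feed Gray's expansion (\ref{e3})--(\ref{e4}) into the resulting volume inequality to force first $\mathrm{Scal}_h\equiv 0$ and then a sign condition on $\Lambda(h,p)$, conclude $\mathrm{Ric}_h\equiv 0$ and hence flatness in dimension $3$, and finish the global statement via injectivity of $\mathrm{exp}_p$ on $E_pM$. The only differences are cosmetic (and slightly in your favor): you prove $\underline{\mathcal{K}}_p\geq 1$ directly for any metric, cleanly separating nonnegativity from rigidity where the paper runs a single contradiction argument, and your Weyl-vanishing identity giving $\Lambda(h,p)=-4\left|\mathrm{Ric}_h\right|^2$ when $\mathrm{Scal}_h\equiv 0$ is a slicker derivation of exactly the quantity $-4\sum_i\mathrm{Ric}_h(X_i,X_i)^2-8\sum_{i<j}\mathrm{Ric}_h(X_i,X_j)^2$ that the paper reaches through the component computation (\ref{e6})--(\ref{e8}).
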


\begin{proof}

We proceed by contradiction. Suppose that
$\underline{\mathcal{M}}(\Sigma) < 0$. This implies that there exists
a metric $h$ on $\Omega$ having the given Bartnik boundary data and with
$\mathrm{Scal}_h\geq 0$, such that
$\log\underline{\mathcal{K}}_p(\Omega) < 0$ for each $p\in\Omega$. Pick
a point $p\in\Omega$ and consider balls of radius $r>0$ about the origin in
$T_pM$. Note that for $r$ sufficiently small these balls are in
$\mathcal{A}_p(\Omega)$, and that their images under the exponential map are
also a sequence of metric balls. We then have
\begin{equation}\label{e5'}
\frac{\mathrm{Vol}(B(r,0))}{\mathrm{Vol}_h(\mathrm{exp}_p B(r,0))} =
\frac{\mathrm{Vol}(B(r,0))}{\mathrm{Vol}_h(B_{r}(p))}< 1.
\end{equation}
If $\mathrm{Scal}_h(p)>0$ then the expansion (\ref{e3}) contradicts (\ref{e5'}). Thus, either $\underline{\mathcal{M}}(\Sigma)\geq 0$
or $\mathrm{Scal}_{h}(p)=0$ for all $p\in\Omega$. We proceed with the zero scalar curvature assumption. Comparing the expansion (\ref{e3})
and the inequality (\ref{e5'}) again, yields $\Lambda(h,p)\geq 0$. Since $\Omega$ has dimension 3 the curvature tensor
can be expressed entirely in terms of the Ricci tensor:
\begin{eqnarray}
\mathrm{Riem}_h(X_i,X_j,X_i,X_j) &=& \frac{1}{2}\left[
	\mathrm{Ric}_h(X_i,X_i)+\mathrm{Ric}_h(X_j,X_j)-\mathrm{Ric}_h(X_k,X_k)
	\right] \label{e6} \\
\mathrm{Riem}_h(X_i,X_j,X_i,X_k) &=& \mathrm{Ric}_h(X_j,X_k), \label{e7}
\end{eqnarray}
where $\{X_1,X_2,X_3\}$ are coordinate fields orthonormal at $p$,
and $i,j,k\in\{1,2,3\}$ are distinct.  Since $\mathrm{Scal}_h\equiv 0$ on $\Omega$,
(\ref{e4}) gives the following expression for $\Lambda$:
\begin{eqnarray*}
\Lambda(h,p) &=& 8\left[\sum_i\mathrm{Ric}_h(X_i,X_i)^2+
	2\sum_{i<j}\mathrm{Ric}_h(X_i,X_j)^2 \right] \\
& & -3 \left[\sum_{[ijk]}\left(\mathrm{Ric}_h(X_i,X_i)+\mathrm{Ric}_h(X_j,X_j)
	-\mathrm{Ric}_h(X_k,X_k)\right)^2
	+8\sum_{j<k}\mathrm{Ric}_h(X_j,X_k)^2 \right].
\end{eqnarray*}
Here $[ijk]$ denotes summation over the three cyclic permutations of $(1,2,3)$.
Since $\Lambda(h,p)\geq 0$ for all $p\in\Omega$, it follows that
\begin{equation}\label{e8}
-\sum_i\mathrm{Ric}_h(X_i,X_i)^2
	+6\sum_{i<j}\mathrm{Ric}_h(X_i,X_i)\mathrm{Ric}_h(X_j,X_j)
	-8\sum_{i<j}\mathrm{Ric}_h(X_i,X_j)^2\geq 0.
\end{equation}
Furthermore, $\mathrm{Scal}_h(p)=0$ for all $p\in\Omega$ gives that
\begin{equation*}
0 = \left(\sum_i\mathrm{Ric}_h(X_i,X_i) \right)^2 = \sum_i\mathrm{Ric}_h(X_i,X_i)^2
	+ 2\sum_{i<j}\mathrm{Ric}_h(X_i,X_i)\mathrm{Ric}_h(X_j,X_j).
\end{equation*}
Combining this and (\ref{e8}) produces
\begin{equation*}
-4\sum_i\mathrm{Ric}_h(X_i,X_i)^2 -8\sum_{i<j}\mathrm{Ric}_h(X_i,X_j)^2\geq 0.
\end{equation*}
Since $p$ was arbitrary, the Ricci curvature is identically zero on $\Omega$.
Applying (\ref{e6}) and (\ref{e7}) gives that the full Riemann tensor
is also identically zero. Thus, $(\Omega,h)$ is locally isometric to Euclidean space and
$\log\underline{\mathcal{K}}_p(\Omega)= 0$ for all $p\in\Omega$. The conclusion of the
theorem now follows, except perhaps for the last sentence. To complete the proof note
that since $\mathrm{exp}_p|_U$ is injective for any $U\in\mathcal{A}_p(\Omega)$, if
$\mathrm{exp}_pU=\Omega$ then it is surjective as well.

\end{proof}

\section{Quasi-Local Mass from Distortion of a Modulus of Curves}

First we introduce some notation. As above $(M,g)$ will denote a complete Riemannian 3-manifold of
nonnegative scalar curvature. For any set of curves
$A\subset C^{0}([0,1],M)$, consider the set of Borel functions
\begin{equation*}
\mathcal{F}(A) = \{f:M\rightarrow [0,\infty] \mid
	\int_\gamma f\geq\mathrm{dist}_g(\gamma(0),\gamma(1))\text{ for all }
	\gamma\in A\},
\end{equation*}
and define the modulus of curves in $A$,
\begin{equation*}
\mathrm{mod}(A) = \inf\{\parallel f\parallel_g \mid f\in\mathcal{F}(A)\},
\end{equation*}
where $\parallel f\parallel_g$ is the $L^3$ norm induced by the Riemannian measure
(i.e. it is the volume of the possibly singular conformal metric $f^2g$).

Similarly, for a set of curves $B$ in $T_pM$ take the set of Borel functions
\begin{equation*}
\mathcal{F}_p(B) = \{f: T_pM\rightarrow [0,\infty]\mid
	\int_\gamma f\geq\parallel\gamma(0)-\gamma(1)\parallel_g
	\text{ for all }\gamma\in B\},
\end{equation*}
and define the modulus of curves in $B$,
\begin{equation*}
\mathrm{mod}_p(B) = \inf\{\parallel f\parallel_{(p)} \mid f\in\mathcal{F}_p(B)\} ,
\end{equation*}
where $\parallel f\parallel_{(p)}$ is the $L^3$ norm of $f$ on $T_pM$ induced by the inner
product $g$ restricted to $T_pM$.  Note that this modulus is the same
as that on curves in $\mathbb{R}^3$ under the Euclidean metric and the
subscript $p$ is there only to denote in which tangent space the curves reside.

For any set $U\subset(M,g)$ with nonempty interior, $\Gamma_U$ will be the
set of all continuous curves $\gamma$ parameterized on the unit interval with
$\gamma(t)$ in the interior of $U$ for all $0<t<1$ and
$\gamma(0),\gamma(1)\in\partial U$. Estimates of the modulus of sets of curves $\Gamma_U$
were given in \cite{nnk} Lemma 2.4, Lemma 2.5, and Theorem 2.6.  For the convenience of
the reader we summarize relevant results here in the following two lemmas.

\begin{lemma}\label{lemma1}
If $U=B_{r}(p)$ is a metric ball in $(M,g)$ and $r$ is less than the
convexity radius at $p$, then $\mathrm{mod}(\Gamma_U)=\mathrm{Vol}_g(U)$.
\end{lemma}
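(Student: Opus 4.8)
The plan is to establish the two inequalities $\mathrm{mod}(\Gamma_U)\le\mathrm{Vol}_g(U)$ and $\mathrm{mod}(\Gamma_U)\ge\mathrm{Vol}_g(U)$ separately, keeping in mind that $\parallel f\parallel_g$ is the volume $\int_M f^3\,dV_g$ of the conformal metric $f^2g$. The upper bound is immediate: taking $f=\mathbf{1}_U$, every $\gamma\in\Gamma_U$ satisfies $\int_\gamma f=\mathrm{length}_g(\gamma)\ge\mathrm{dist}_g(\gamma(0),\gamma(1))$, so $f\in\mathcal{F}(\Gamma_U)$, and since $\parallel\mathbf{1}_U\parallel_g=\mathrm{Vol}_g(U)$ this already gives $\mathrm{mod}(\Gamma_U)\le\mathrm{Vol}_g(U)$.

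For the lower bound I would fix an arbitrary $f\in\mathcal{F}(\Gamma_U)$; since $f\ge 0$ it suffices to show $\int_U f^3\,dV_g\ge\mathrm{Vol}_g(U)$. The idea is to test the admissibility condition only against the family of geodesic chords of $U$ and then to integrate over that family. Here the hypothesis that $r$ is less than the convexity radius is essential: it guarantees that $U$ is strongly convex, so that for each inward unit vector $w$ at a point $x\in\partial U$ the geodesic $\gamma_w$ issuing from $w$ travels through the interior of $U$ until a finite exit time $\ell(w)$, and the resulting chord is the unique minimizing geodesic between its endpoints. Consequently $\gamma_w\in\Gamma_U$ and $\mathrm{dist}_g(\gamma_w(0),\gamma_w(\ell(w)))=\ell(w)$, so admissibility yields $\int_0^{\ell(w)}f(\gamma_w)\,dt\ge\ell(w)$; applying H\"older's inequality on $[0,\ell(w)]$ then upgrades this to the cubic estimate $\int_0^{\ell(w)}f^3(\gamma_w)\,dt\ge\ell(w)$.

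It then remains to integrate this per-chord inequality over all chords with the correct weight, for which I would invoke Santal\'o's integral-geometric formula on the unit tangent bundle $SU$. For a function $F$ on $SU$ this expresses the Liouville integral $\int_{SU}F\,d\mu$ as $\int_{\partial_+SU}\big(\int_0^{\ell(w)}F(\phi_t w)\,dt\big)\langle w,\nu\rangle\,d\Sigma(w)$, where $\partial_+SU$ denotes the inward boundary vectors and $\phi_t$ the geodesic flow. Applying it with $F=f^3\circ\pi$ and using the chord estimate gives
\begin{equation*}
4\pi\int_U f^3\,dV_g=\int_{SU}F\,d\mu\ge\int_{\partial_+SU}\ell(w)\,\langle w,\nu\rangle\,d\Sigma(w)=\int_{SU}1\,d\mu=4\pi\,\mathrm{Vol}_g(U),
\end{equation*}
the factor $4\pi$ being the area of the unit fiber sphere and the last two equalities being Santal\'o's formula applied to $F\equiv 1$. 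Dividing by $4\pi$ yields $\int_U f^3\,dV_g\ge\mathrm{Vol}_g(U)$, hence $\mathrm{mod}(\Gamma_U)\ge\mathrm{Vol}_g(U)$, completing the proof.

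The main obstacle is the lower bound, and within it the integral-geometric step. One must verify that the geodesic chords form a rich enough subfamily of $\Gamma_U$ to force the bound, which is precisely where strong convexity, and hence the convexity-radius hypothesis, enters: it makes every chord minimizing so that its length equals the distance between its endpoints and thereby turns the admissibility inequality into $\int_\gamma f\ge\ell$. One must also match the normalizations in Santal\'o's formula so that the weighted integral of the exit lengths reproduces exactly $\mathrm{Vol}_g(U)$ rather than a dimensional constant times it. The H\"older step is what permits the passage from the linear admissibility condition to the cubic quantity $\int_U f^3$ appearing in $\parallel f\parallel_g$.
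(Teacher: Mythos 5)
Your proof is correct, but note that the paper itself contains no proof of this lemma: it is quoted verbatim from \cite{nnk} (Lemmas 2.4, 2.5 and Theorem 2.6 there), so there is no internal argument to compare against, and your write-up effectively supplies the missing self-contained proof along the standard integral-geometric lines one would expect the cited reference to follow. The two halves check out. For the upper bound you correctly read the paper's convention that $\parallel f\parallel_g$ is the conformal volume $\int_M f^3\,dV_g$ rather than its cube root (this is the same device the paper itself uses at the end of the proof of Theorem 3.3, where $1\in\mathcal{F}(\Gamma_U)$ gives $\mathrm{mod}(\Gamma_U)\leq\mathrm{Vol}_h(U)$), and with that convention $f=\mathbf{1}_U$ gives exactly $\mathrm{Vol}_g(U)$. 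For the lower bound, the three ingredients fit together as you say: strong convexity of $B_r(p)$ below the convexity radius makes each geodesic chord lie in $\Gamma_U$ and coincide with the unique minimizing geodesic between its endpoints, so the admissibility inequality reads $\int_0^{\ell(w)}f(\gamma_w)\,dt\geq\ell(w)$; H\"older in the form $\int_0^\ell f\,dt\leq\ell^{2/3}\bigl(\int_0^\ell f^3\,dt\bigr)^{1/3}$ upgrades this to $\int_0^{\ell(w)}f^3(\gamma_w)\,dt\geq\ell(w)$; and Santal\'o's formula is applicable because a strongly convex ball has no trapped geodesics and a smooth boundary sphere, with the fiber factor $4\pi$ appearing identically on both sides (via $F=f^3\circ\pi$ on one side and $F\equiv 1$ on the other), so the normalizations cancel exactly as required. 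Your closing remarks correctly identify where convexity is indispensable: without chords realizing the distance between their endpoints, the admissibility condition only bounds $\int_\gamma f$ below by a quantity strictly smaller than length, and the Santal\'o integration would not close to $\mathrm{Vol}_g(U)$ --- this is precisely why the lemma is stated only below the convexity radius while the inequality $\mathrm{mod}(\Gamma_U)\leq\mathrm{Vol}_g(U)$ holds for arbitrary $U$.
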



\begin{lemma}\label{lemma2}
Let $V$ be an open set with compact closure.
\begin{enumerate}
\item[(\textit{i})] If $V\subset B_{r}(p)$ with $r$ less than the
convexity radius at $p$, then $\mathrm{mod}(\Gamma_V) = \mathrm{Vol}_g(V)$.
\item[(\textit{ii})] If $V\subset T_pM$, then $\mathrm{mod}_p(\Gamma_V) = \mathrm{Vol}(V)$.
\end{enumerate}
\end{lemma}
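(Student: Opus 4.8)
The plan is to obtain each equality by proving the two opposite inequalities $\mathrm{mod}\le\mathrm{Vol}$ and $\mathrm{mod}\ge\mathrm{Vol}$ separately; the first is elementary and the second carries all the content. For the upper bound in both parts, note that the constant function $f\equiv 1$ is admissible: any curve satisfies $\int_\gamma 1=\mathrm{length}(\gamma)\ge\mathrm{dist}_g(\gamma(0),\gamma(1))$ in the manifold case and $\int_\gamma 1\ge\|\gamma(0)-\gamma(1)\|_g$ in the tangent-space case, so $1\in\mathcal{F}(\Gamma_V)$ and $1\in\mathcal{F}_p(\Gamma_V)$ respectively. Since the relevant norm is the volume of the conformal metric $f^2g$, which in dimension three is $\int_V f^3$, we have $\|1\|=\mathrm{Vol}_g(V)$ (resp. $\mathrm{Vol}(V)$), giving $\mathrm{mod}(\Gamma_V)\le\mathrm{Vol}_g(V)$ and $\mathrm{mod}_p(\Gamma_V)\le\mathrm{Vol}(V)$.

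For the lower bound in part (ii) I would run the classical $L^3$-modulus computation. Fix a unit vector $e\in T_pM$ and foliate $V$ by the lines parallel to $e$. For each such line $\ell$, the open set $\ell\cap V$ is a disjoint union of segments; each segment $c$ is a straight chord, hence its own minimizing curve and an element of $\Gamma_V$, so admissibility gives $\int_c f\,ds\ge\mathrm{length}(c)=L_c$. Then Hölder's inequality $\int_c f\,ds\le\bigl(\int_c f^3\,ds\bigr)^{1/3}L_c^{2/3}$ forces $\int_c f^3\,ds\ge L_c$, and summing over the segments yields $\int_{\ell\cap V}f^3\,ds\ge|\ell\cap V|$. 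Integrating over the plane $e^{\perp}$ and applying Fubini gives $\int_V f^3\ge\mathrm{Vol}(V)$, and taking the infimum over admissible $f$ finishes part (ii).

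The heart of the matter is the lower bound in part (i), and the obstruction is clear: no single geodesic foliation of $B_r(p)$ is volume preserving, so a fixed family of geodesics carries a nontrivial Jacobian and the naive transcription of the Euclidean Fubini step cannot reproduce $\mathrm{Vol}_g(V)$ exactly. My plan is to circumvent this by averaging over all directions simultaneously via integral geometry. Let $SV$ be the unit tangent bundle over $V$ with its Liouville measure $d\mathcal{L}=d\sigma_x(v)\,dV_g(x)$, so that $\int_{SV}f(x)^3\,d\mathcal{L}=4\pi\int_V f^3\,dV_g$. Because the geodesic flow preserves $\mathcal{L}$, this measure disintegrates as $d\mathcal{L}=ds\,d\nu(\gamma)$ over the space of geodesics (Santaló's formula), whence $\int_{SV}f^3\,d\mathcal{L}=\int\bigl(\int_{\gamma\cap V}f^3\,ds\bigr)\,d\nu(\gamma)$. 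On each geodesic I decompose $\gamma\cap V$ into components $c$; since $r$ is below the convexity radius, every such sub-arc is the unique minimizing geodesic between its endpoints, so $\mathrm{dist}_g(c(0),c(1))=L_c$ and admissibility again gives $\int_c f\,ds\ge L_c$. The same Hölder step produces $\int_{\gamma\cap V}f^3\,ds\ge|\gamma\cap V|$, and reintegrating against $d\nu$ returns $\int_{SV}f^3\,d\mathcal{L}\ge\int_{SV}1\,d\mathcal{L}=4\pi\,\mathrm{Vol}_g(V)$. Cancelling the factor $4\pi$ and taking the infimum over admissible $f$ yields $\mathrm{mod}(\Gamma_V)\ge\mathrm{Vol}_g(V)$, which together with the upper bound completes part (i); taking $V=B_r(p)$ recovers Lemma \ref{lemma1} as a special case.

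The step I expect to be the main obstacle is making this integral-geometric disintegration rigorous on a general precompact $V$: one must justify $d\mathcal{L}=ds\,d\nu$ together with the measurability of $\gamma\mapsto\int_{\gamma\cap V}f^3$, discard the measure-zero family of geodesics tangent to $\partial V$, and handle non-rectifiable competitors (for which $\int_\gamma f=\infty$ makes admissibility automatic, so they do not obstruct the bound). The convexity radius hypothesis is used precisely to ensure that each component of $\gamma\cap V$ is minimizing, which is what upgrades the raw admissibility inequality to the usable form $\int_c f\,ds\ge L_c$; without it, a component need not realize the distance between its endpoints and the argument breaks down.
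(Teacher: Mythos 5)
The paper itself contains no proof of Lemma \ref{lemma2}: it is quoted verbatim from \cite{nnk} (Lemmas 2.4, 2.5 and Theorem 2.6 there), so your blind attempt is being compared against a citation rather than an in-paper argument. The one fragment of the paper that does overlap is the remark at the end of the proof of Theorem 3.3, where $1\in\mathcal{F}(\Gamma_U)$ is used to get $\mathrm{mod}(\Gamma_U)\le\mathrm{Vol}_h(U)$; that is exactly your easy half. Beyond that, your reconstruction is sound and is the natural route: part (\textit{ii}) is the classical foliation-plus-H\"older computation for the $L^3$-modulus, and for part (\textit{i}) the Santal\'o/Liouville disintegration is the correct replacement for the Euclidean Fubini step. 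You also locate the role of the convexity radius correctly: each component $c$ of $\gamma\cap V$ is a geodesic segment contained in $B_r(p)$, hence the unique globally minimizing geodesic between its endpoints, so $\mathrm{dist}_g(c(0),c(1))=L_c$ and the distance-admissibility inequality is usable at full strength; for general $V$ this fails, which is consistent with the paper retaining only $\mathrm{mod}(\Gamma_U)\le\mathrm{Vol}_h(U)$ in that generality.

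Two points need tightening, neither fatal. First, admissible functions are defined on all of $M$ (resp.\ $T_pM$) and $\parallel f\parallel_g$ is the total volume of $f^2g$, so $f\equiv 1$ has infinite norm whenever $\mathrm{Vol}_g(M)=\infty$; take $f=\chi_V$ instead, which is still admissible because a curve in $\Gamma_V$ meets $\partial V$ only at its two endpoints, a set of arclength measure zero, and which has $\parallel \chi_V\parallel$ equal to $\mathrm{Vol}_g(V)$ exactly (avoid $\chi_{\overline{V}}$, since $\partial V$ may have positive measure for a general open $V$). Second, the obstacle you flag --- justifying $d\mathcal{L}=ds\,d\nu$ over a rough domain $V$ --- dissolves if you disintegrate over the ball rather than over $V$: apply Santal\'o's formula on $B_r(p)$, whose boundary is a smooth geodesic sphere since $r$ is below the convexity radius, to the Borel integrand $\chi_V f^3$, giving $4\pi\int_V f^3\,dV_g=\int_{\partial_{+}SB_r(p)}\bigl(\int_0^{\tau(\xi)}(\chi_V f^3)(\phi_t\xi)\,dt\bigr)\langle\xi,n\rangle\,d\mu(\xi)$. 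No trajectory is trapped because $t\mapsto d(p,\gamma(t))^2$ is strictly convex along geodesics in the ball, measurability is ordinary Fubini for a Borel function composed with the smooth flow, and each ball chord meets $V$ in countably many components, to each of which your per-component H\"older step applies; reintegrating yields $\int_V f^3\ge\mathrm{Vol}_g(V)$ with no regularity assumption on $\partial V$ whatsoever. With these two adjustments your argument is complete and, as you observe, recovers Lemma \ref{lemma1} as the special case $V=B_r(p)$.
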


Now we define a quasi-local mass from distortion of the exponential map
on the modulus on curves given above. Let $\Sigma$ be a closed surface embedded
inside $M$ and bounding a region, with Bartnik data $(\Sigma,g|_{\Sigma},H_{g})$.
For any compact, path connected Riemannian 3-manifold $(\Omega,h)$ with boundary $\Sigma$, and any $p\in \Omega$ we let
$\mathcal{A}_p(\Omega)$ be defined as in (\ref{e5}).  For each of these
collections let the (outer) modulus distortion $\overline{\mathcal{K}}_p$
of the exponential map at the point $p\in \Omega$ be given by
\begin{equation*}
\overline{\mathcal{K}}_p(\Omega) = \sup_{U\in\mathcal{A}_p(\Omega)}\frac
	{\mathrm{mod}_p(\Gamma_U)}{\mathrm{mod}(\Gamma_{\overline{\mathrm{exp}_pU}})}.
\end{equation*}
We then define the quasi-local mass for these distortions in the following way
\begin{equation*}
\overline{\mathcal{M}}(\Sigma) = \inf_h
	\sup_{p\in\Omega}\log \overline{\mathcal{K}}_p(\Omega),
\end{equation*}
where the infimum is taken over all metrics $h$
with nonnegative scalar curvature, whose Bartnik boundary data $(\partial\Omega,h|_{\partial\Omega},H_{h})$
agrees with the given boundary data $(\Sigma,g|_{\Sigma}, H_{g})$. We conjecture that these boundary conditions guarantee that
this definition of mass is nontrivial. That is, it should hold that $\overline{\mathcal{M}}(\Sigma)>0$ unless $\Sigma$
embeds isometrically into Euclidean space in such a way that the mean curvature from the Euclidean embedding agrees with
$H_{g}$, the mean curvature from the embedding in $(M,g)$. Here we shall prove the following.

\begin{theorem}
Let $(M,g)$ be a complete Riemannian $3$-manifold with nonnegative scalar curvature. For any closed surface $\Sigma\subset M$ bounding a path connected
precompact domain, we have $\overline{\mathcal{M}}(\Sigma)\geq 0$. Equality holds and the infimum is realized by
a metric $h$ on a domain $\Omega$, if and only if $(\Omega,h)$ is locally isometric to Euclidean space. Furthermore
$\underline{\mathcal{M}}(\Sigma)\leq\overline{\mathcal{M}}(\Sigma)$.
\end{theorem}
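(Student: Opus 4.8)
The plan is to reduce both the comparison $\underline{\mathcal{M}}(\Sigma)\le\overline{\mathcal{M}}(\Sigma)$ and the rigidity statement to the volume-distortion analysis already carried out in Theorem \ref{thm1}, using Lemmas \ref{lemma1} and \ref{lemma2} to pass between the modulus and the volume.

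First I would prove the comparison. Fix an admissible metric $h$, a point $p\in\Omega$, and an arbitrary $U\in\mathcal{A}_p(\Omega)$. Because $U$ lies in $T_pM$ with its flat metric, Lemma \ref{lemma2}(ii) gives $\mathrm{mod}_p(\Gamma_U)=\mathrm{Vol}(U)$, while the general bound of the modulus by volume noted in Section 2 gives $\mathrm{mod}(\Gamma_{\overline{\mathrm{exp}_pU}})\le\mathrm{Vol}_g(\mathrm{exp}_pU)$. Dividing, each quotient in the supremum defining $\overline{\mathcal{K}}_p(\Omega)$ dominates the corresponding quotient defining $\underline{\mathcal{K}}_p(\Omega)$, so $\overline{\mathcal{K}}_p(\Omega)\ge\underline{\mathcal{K}}_p(\Omega)$. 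Since $\log$ is monotone, taking $\sup_p$ and then $\inf_h$ over the same class of metrics preserves the inequality and yields $\overline{\mathcal{M}}(\Sigma)\ge\underline{\mathcal{M}}(\Sigma)$; combined with $\underline{\mathcal{M}}(\Sigma)\ge0$ from Theorem \ref{thm1}, this delivers the nonnegativity $\overline{\mathcal{M}}(\Sigma)\ge0$ at once.

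Next I would establish the core computation by testing the supremum defining $\overline{\mathcal{K}}_p$ against the Euclidean balls $U=B(r,0)\subset T_pM$, whose exponential images are the metric balls $B_r(p)$. For $r$ below the convexity radius at $p$, Lemma \ref{lemma2}(ii) and Lemma \ref{lemma1} turn the associated quotient into
\begin{equation*}
\frac{\mathrm{mod}_p(\Gamma_{B(r,0)})}{\mathrm{mod}(\Gamma_{\overline{B_r(p)}})}=\frac{\mathrm{Vol}(B(r,0))}{\mathrm{Vol}_h(B_r(p))},
\end{equation*}
which is exactly the ratio in (\ref{e5'}). If some admissible $h$ satisfies $\overline{\mathcal{K}}_p(\Omega)\le1$ for every $p$ — in particular if it realizes a nonpositive infimum — then this quotient is $\le1$ for all small $r$, reproducing (\ref{e5'}). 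The scalar-curvature expansion (\ref{e3})--(\ref{e4}) then forces, exactly as in Theorem \ref{thm1}, $\mathrm{Scal}_h\equiv0$, hence $\Lambda(h,p)\ge0$, hence $\mathrm{Ric}_h\equiv0$ via (\ref{e8}) together with the vanishing-trace identity, and finally $\mathrm{Riem}_h\equiv0$ through (\ref{e6})--(\ref{e7}); thus $(\Omega,h)$ is flat.

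For the rigidity I would combine this with the flat case. When $(\Omega,h)$ is locally Euclidean, $\mathrm{exp}_p$ is injective on $E_pM$ and a local isometry, so it restricts to an isometry of each $U\in\mathcal{A}_p(\Omega)$ onto its image; since the modulus is an isometry invariant every quotient equals $1$, giving $\log\overline{\mathcal{K}}_p\equiv0$, so the infimum equals $0$ and is realized. Conversely, if the infimum is attained at $0$ by some $h$ then $\overline{\mathcal{K}}_p(\Omega)\le1$ for all $p$, and the core computation forces flatness; this also rules out $\overline{\mathcal{M}}(\Sigma)<0$. The points needing care, rather than a genuine obstacle, are two measure-theoretic reductions: that replacing $B_r(p)$ by its closure leaves the modulus unchanged, so that Lemma \ref{lemma1} applies to the denominator above, and that the bound $\mathrm{mod}(\Gamma_{\overline{\mathrm{exp}_pU}})\le\mathrm{Vol}_g(\mathrm{exp}_pU)$ remains valid when $\mathrm{exp}_pU$ is too large to lie in any convex ball — which is precisely why the comparison uses the global volume upper bound rather than the sharp identity of Lemma \ref{lemma1}.
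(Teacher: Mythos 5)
Your proposal is correct and takes essentially the same approach as the paper: the same test against balls $B(r,0)$ converted to volume ratios via Lemmas \ref{lemma1} and \ref{lemma2}, the same appeal to the expansion (\ref{e3})--(\ref{e4}) and the Theorem \ref{thm1} algebra for flatness, and the same observation that the constant function $1$ (cut off to $\overline{U}$) lies in $\mathcal{F}(\Gamma_U)$ to get $\mathrm{mod}(\Gamma_U)\leq\mathrm{Vol}_h(U)$ for the comparison. The only difference is organizational: you deduce $\overline{\mathcal{M}}(\Sigma)\geq 0$ as a corollary of $\overline{\mathcal{M}}\geq\underline{\mathcal{M}}\geq 0$ instead of re-running the contradiction argument directly, a harmless shortcut that still requires (and you supply) the paper's ball computation for the rigidity direction.
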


\begin{proof}

The proof is similar to that of Theorem 2.1, so we only give an outline. We
proceed by contradiction. Suppose that
$\overline{\mathcal{M}}(\Sigma) < 0$. This implies that there exists
a metric $h$ on $\Omega$ having the given Bartnik boundary data and with $\mathrm{Scal}_h\geq 0$, such that
$\log\overline{\mathcal{K}}_p(\Omega) < 0$ for each $p\in\Omega$. Pick a point $p\in\Omega$ and consider
$U=B(r,0)\subset T_pM$. By Lemma 3.1,
\begin{equation}\label{e9}
1 > \overline{\mathcal{K}}_p(\Omega) \geq
	\frac{\mathrm{mod}_p(\Gamma_U)}{\mathrm{mod}(\Gamma_{\overline{\mathrm{exp}_pU}})}
	= \frac{\mathrm{Vol}(B(r,0))}{\mathrm{Vol}_g(B_{r}(p))}.
\end{equation}
If $\mathrm{Scal}_{h}(p)>0$, this contradicts expansion (\ref{e3}). Thus, either $\overline{\mathcal{M}}(\Sigma)\geq 0$
or $\mathrm{Scal}_{h}(p)=0$ for all $p\in\Omega$. We proceed with the zero scalar curvature assumption. Comparing the expansion (\ref{e3})
and the inequality (\ref{e9}) again, yields $\Lambda(h,p)\geq 0$, where $\Lambda(h,p)$ is given in (\ref{e4}). As in the proof of Theorem 2.1, this implies that the full Riemann
tensor vanishes, and hence $(\Omega,h)$ is locally isometric to Euclidean space. In this case, the exponential map at each $p\in\Omega$
restricted to any $U\in\mathcal{A}_p(\Omega)$ is an isometry, and so applying Lemma 3.2 shows that $\overline{\mathcal{M}}(\Omega)=0$.
The conclusions of the theorem now follow, except perhaps for the last sentence. To complete the proof it is sufficient to show that
if a set $U$ has nonempty interior, then
\begin{equation*}
\mod(\Gamma_U)\leq \mathrm{Vol}_h(U).
\end{equation*}
Since the integral of $1$ along any curve $\gamma$ is greater than or equal to the
Riemannian distance between the endpoints of $\gamma$, we have that
$1\in\mathcal{F}(\Gamma_U)$ and the claim follows immediately.

\end{proof}

\bibliographystyle{amsplain}

\begin{thebibliography}{99}

\bibitem{AndersonKhuri} M. Anderson, and M. Khuri,
\emph{The static extension problem in General Relativity},
preprint, arXiv:0909.4550, 2011.

\bibitem{Bartnik} R. Bartnik, \emph{New definition of quasilocal mass}, Phys. Rev. Lett.,
\textbf{62}(1989), 2346-2348.

\bibitem{Bray} H. Bray, \emph{Proof of the Riemannian Penrose inequality using the positive
mass theorem}, J. Differential Geom., \textbf{59}(2001), 177-267.

\bibitem{ChristoYau} D. Christodoulou, and S.-T. Yau, \emph{Some remarks on the quasi-local mass}, In
Mathematics and general relativity (Santa Cruz, CA, 1986), Volume 71 of
Contemp. Math., pages 9-14, Amer. Math. Soc., Providence, RI, 1988.

\bibitem{Corvino} J. Corvino, \emph{Scalar curvature deformation and
gluing construction for the Einstein constraint
equations}, Comm. Math. Phys., \textbf{214}(2000),
137-189.

\bibitem{GHL} S. Gallot, D. Hulin, and J. Lafontaine, \emph{Riemannian
Geometry} 2nd ed., Springer-Verlag, 1990.

\bibitem{AG} A. Gray, \emph{The volume of a small geodesic ball of a
Riemannian manifold}, Mich. Math. J., \textbf{20}(1973), 329-344.

\bibitem{HuiskenIlmanen} G. Huisken, and T. Ilmanen, \emph{The inverse mean curvature flow and the
Riemannian Penrose inequality}, J. Differential Geom., \textbf{59}(2001), 353-437.

\bibitem{nnk} N. Katz, \emph{A modulus of curves from distance},
Ann. Acad. Sci. Fenn. Math., \textbf{35}(2010), 3-21.

\bibitem{Miao1} P. Miao, \emph{Positive mass theorem on manifolds
admitting corners along a hypersurface}, Adv.
Theor. Math. Phys., \textbf{6}(2002), 1163-1182.

\bibitem{Miao2} P. Miao, \emph{On existence of static metric extensions in general relativity}.
Comm. Math. Phys., \textbf{241}(2003), 27-46.

\bibitem{ShiTam} Y. Shi, and L. Tam, \emph{Positive mass theorem and
the boundary behaviors of compact manifolds with
nonnegative scalar curvature},  J. Differential Geom.,
\textbf{62}(2002), 79-125.

\bibitem{Szabados} L. Szabados, \emph{Quasi-Local Energy-Momentum and Angular
Momentum in General Relativity}, Living Reviews in Relativity, 2009.


\end{thebibliography}

\end{document}